\def\version{8 January 2015}

\documentclass[reqno]{amsart}
\usepackage{amssymb,graphics,graphicx,bbm,hyperref,color}
\usepackage{mathtools,MnSymbol}

\usepackage{multicol}
\usepackage{enumitem}


\definecolor{gray}{rgb}{0.93,0.93,0.93}
\definecolor{light-gold}{rgb}{0.99,0.97,0.78}

\setlength{\oddsidemargin}{10mm}
\setlength{\evensidemargin}{15mm}
\setlength{\textwidth}{140mm}

\def\be{\begin{equation}}
\def\ee{\end{equation}}
\def\bm{\begin{multline}}
\def\eem{\end{multline}}
\def\bfig{\begin{figure}[htb]}
\def\efig{\end{figure}}
\newcommand{\dd}{{\rm d}}
\newcommand{\e}[1]{\,{\rm e}^{#1}\,}
\newcommand{\ii}{{\rm i}}
\def\Tr{{\operatorname{Tr\,}}}

\renewcommand{\Re}{{\rm Re}\;}

\numberwithin{equation}{section}
\newtheorem{theorem}{Theorem}[section]

\newtheorem{lemma}[theorem]{Lemma}
\newtheorem{corollary}[theorem]{Corollary}

\newcommand{\caE}{{\mathcal E}}
\newcommand{\caH}{{\mathcal H}}
\newcommand{\caL}{{\mathcal L}}
\newcommand{\bbC}{{\mathbb C}}
\newcommand{\bbE}{{\mathbb E}}
\newcommand{\bbN}{{\mathbb N}}
\newcommand{\bbP}{{\mathbb P}}

\newcommand{\bbZ}{{\mathbb Z}}
\newcommand{\bsh}{{\boldsymbol h}}

\newcommand{\eps}{{\varepsilon}}

\newcommand{\EE}{\mathbb{E}} 
\newcommand{\bE}{\mathbf{E}}
\newcommand{\bP}{\mathbf{P}}
\newcommand{\PP}{\mathbb{P}}
\newcommand{\ZZ}{\mathbb{Z}}
\newcommand{\cA}{\mathcal{A}}
\newcommand{\cB}{\mathcal{B}} 
\newcommand{\cL}{\mathcal{L}}
\newcommand{\cE}{\mathcal{E}}
\renewcommand{\a}{\alpha}
\renewcommand{\b}{\beta}
 
\renewcommand{\d}{\delta} 
\newcommand{\G}{\Gamma}
\newcommand{\g}{\gamma} 
\renewcommand{\L}{\Lambda} 
 
\newcommand{\om}{\omega}
 
\newcommand{\s}{\sigma}

\newcommand{\sm}{\setminus}
\newcommand{\lra}{\leftrightarrow}

\newcommand{\one}{\hbox{\rm 1\kern-.27em I}}

\def\bbone{{\mathchoice {\rm 1\mskip-4mu l} {\rm 1\mskip-4mu l} {\rm 1\mskip-4.5mu l} {\rm 1\mskip-5mu l}}}

\makeatletter
  \def\tagform@#1{\maketag@@@{\footnotesize{(#1)}\@@italiccorr}}
\makeatother

\renewcommand{\eqref}[1]{(\ref{#1})}

\begin{document}

{\hfill\small \version} \vspace{2mm}

\title{Decay of transverse correlations in quantum Heisenberg models}

\author{Jakob E. Bj\"ornberg}
\address{Department of Mathematical Sciences,
Chalmers and University of Gothenburg, 
41296 G\"oteborg, Sweden}
\email{jakob.bjornberg@gmail.com}

\author{Daniel Ueltschi}
\address{Department of Mathematics, University of Warwick,
Coventry, CV4 7AL, United Kingdom}
\email{daniel@ueltschi.org}

\subjclass{60K35, 82B10, 82B20, 82B26, 82B31}

\keywords{Random loop model, quantum Heisenberg models}

\begin{abstract}
We study a class of quantum spin systems
that includes the $S=\tfrac12$ Heisenberg and XY-models,
and prove that two-point correlations exhibit exponential
decay in the presence of a transverse magnetic field.
The field is not necessarily constant, it may be random, and it points in the same direction.
Our proof is entirely probabilistic and it relies on a 
random-loop-representation
of the correlation functions, on stochastic domination, and on first-passage 
percolation.
\end{abstract}

\thanks{\copyright{} 2014 by the authors. This paper may be reproduced, in its
entirety, for non-commercial purposes.\\
The research of \textsc{jeb} is supported by the Knut and Alice
Wallenberg Foundation. We are grateful for mobility support by The Leverhulme Trust through the International Network ÒLaplacians, Random Walks, Quantum Spin SystemsÓ.}

\maketitle

\section{Setting and results}

It has been known since the work of T\'oth~\cite{Toth} and Aizenman
and Nachtergaele~\cite{AN} that certain quantum spin systems may be
represented in terms of a collection of random loops.  The two
representations were recently combined so as to be 
 included in a  larger family
of models~\cite{Uel1}.  We study this family with the addition of
positive transverse fields, and use the loop representation to prove
that two-point correlations decay exponentially.  Some such results can
alternatively be obtained as a consequence of the Lee--Yang theorem
(as we remark below).  However, our method of proof, which uses 
techniques from modern probability theory, is new and
interesting in itself.

This work is one of a growing number of contributions to the
understanding of quantum spin systems using probabilistic graphical
representations.  This includes the recent work by Crawford, Ng and
Starr~\cite{CNS} on emptiness formation in the XXZ model as well as
work on the transverse field Ising 
model~\cite{B-irb,B-van,BG,CI}.
We note in particular that Crawford and Ioffe~\cite{CI}
establish exponential decay of truncated correlations in 
the presence of an external field, using an argument which has some
similarities with our method. 

We consider the following class of quantum spin
systems.
Let $L$ be an even integer and
$\Lambda = \{-\frac12 L, \dots, \frac12L \}^{d} \subset \bbZ^{d}$.
Write $\caE_\L$ for the set of nearest
neighbors in $\L$.  The Hilbert space is 
$\caH_{\Lambda}= \otimes_{x\in\Lambda} \bbC^{2}$ and the Hamiltonian is
\be
\label{def Ham}
H_{\Lambda,\bsh} = -2 \sum_{xy \in \caE_\L} \bigl( S_{x}^{1} S_{y}^{1} +
(2u-1) S_{x}^{2} S_{y}^{2} + S_{x}^{3} S_{y}^{3} - \tfrac14 \bigr)
- \sum_{x\in\Lambda} h_{x} S_{x}^{3}.
\ee
Here, $S_{x}^{i}$ are the usual spin operators that satisfy the
commutation relations $[S_{x}^{1}, S_{y}^{2}] = \ii \delta_{x,y}
S_{x}^{3}$, and further relations obtained by cyclic permutation of
the indices 1, 2, 3.  The parameters $\bsh = (h_{x})_{x\in\Lambda}$ represent external magnetic fields; we assume that they take values in $[0,\infty)$.
The parameter $u$ belongs to $[0,1]$ and
well-known models are obtained for certain values.
The main examples are the $S=\frac12$ Heisenberg and XY models in
transverse fields,   obtained by taking $u=1$ for
the Heisenberg ferromagnet, $u=0$ for
the Heisenberg anti-ferromagnet (up to unitary equivalence),
and $u=\tfrac{1}{2}$  for the XY model.

We actually discuss a more general setting allowing
$S \in \frac12 \bbN$, that is compatible with
the loop-representation (the case $S=\tfrac12$ is
physically the most relevant).  Let $S \in \frac12 \bbN$, and let us consider
the Hilbert space 
$\caH_{\Lambda}= \otimes_{x\in\Lambda} \bbC^{2S+1}$ and the Hamiltonian
\be
\label{def gen Ham}
H_{\Lambda,\bsh} = -\sum_{xy \in \caE_\L} \bigl( uT_{xy} + (1-u) Q_{xy} -
1 \bigr) - \sum_{x\in\Lambda} h_{x} S_{x}^{3}.
\ee
In order to define the operators $T$ and $Q$
that appear above, let $|a\rangle$,
$a = -S, -S+1, \dots, S$ denote a basis of $\bbC^{2S+1}$
of eigenvectors for $S^3_x$.  Then
$S_{x}^{3} |a\rangle_x = a |a\rangle_x$. The transposition operator
$T_{xy}$ acts on $\bbC^{2S+1} \otimes \bbC^{2S+1}$ as follows:
\be
T_{xy} |a\rangle_x \otimes |b\rangle_y = 
|b\rangle_x \otimes |a\rangle_y,
\ee
and the operator $Q_{xy}$ has matrix elements
\be
\langle a|_x \otimes \langle b|_y Q_{x,y} 
|c\rangle_x \otimes |d\rangle_y 
= \delta_{a,b} \delta_{c,d}.
\ee
In the case $S=\frac12$, one can check that the Hamiltonian
of \eqref{def gen Ham} is equal to that of \eqref{def Ham}.

For suitable observables $M$ the finite-volume
states are defined by 
\[
\langle M\rangle_{\L,\bsh}=\frac{\Tr Me^{-\b H_{\L,\bsh}}}{Z(\b,\L,\bsh)},
\quad\mbox{ where } Z(\b,\L,\bsh)=\Tr e^{-\b H_{\L,\bsh}},
\]
and where $\b>0$ denotes the inverse temperature.

Our results consist of two theorems. In Theorem \ref{decay_thm} we assume a uniform lower bound on all $h_{x}$s and we obtain a bound for the transverse correlations that is uniform in the size of the system and in $\beta$:

\begin{theorem}
\label{decay_thm}
Assume that $h_{x} \geq \alpha$ for all $x\in\Lambda$ and some $\alpha>0$. Then there exist constants $C,c>0$ (they depend on $S,d,\alpha$, but not on $L,\beta$) such that
\[
0 < \langle S_{0}^{1} S_{x}^{1} \rangle_{\L,\bsh} < C \e{-c \|x\|}
\]
for all $x\in\Lambda$.
\end{theorem}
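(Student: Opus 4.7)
The plan is first to write $\langle S_0^1 S_x^1\rangle_{\L,\bsh}$ in the random-loop representation and then to bound the resulting connectivity probability by a first-passage-percolation argument. Using a Poisson expansion of $e^{-\beta H_{\L,\bsh}}$, the bond terms $uT_{xy}+(1-u)Q_{xy}$ produce independent Poisson processes on each edge $xy\in\caE_\L$ (``crosses'' at rate $u$ and ``bars'' at rate $1-u$), which together with periodic boundary conditions in time determine a collection of loops in $\L\times[0,\beta)$. After summing over spin labels on the loops, the partition function becomes
\[
Z(\beta,\L,\bsh)=\EE\Bigl[\prod_\gamma w(\gamma;\bsh)\Bigr],\qquad w(\gamma;\bsh)=\sum_{\sigma=-S}^{S}\exp\Bigl(\sigma\sum_{x\in\L}h_xL_x(\gamma)\Bigr),
\]
where $L_x(\gamma)$ denotes the amount of time that $\gamma$ spends at site $x$; for $S=\tfrac12$ this is $2\cosh(\tfrac12\sum_x h_xL_x(\gamma))$.

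In the same expansion, the numerator $\Tr\bigl(S_0^1 S_x^1 e^{-\beta H_{\L,\bsh}}\bigr)$ is represented as a sum over Poisson configurations restricted to those in which $(0,0)$ and $(x,0)$ lie on a common loop $\gamma_{0x}$, with the weight of $\gamma_{0x}$ replaced by a sinh-type analogue $w'(\gamma_{0x};\bsh)=\sum_\sigma\sigma\,e^{\sigma\sum_xh_xL_x(\gamma_{0x})}$ satisfying $0\le w'/w\le 1$. Thus
\[
0\le \langle S_0^1 S_x^1\rangle_{\L,\bsh}\le \mu_{\L,\bsh}\bigl((0,0)\sim (x,0)\bigr),
\]
where $\mu_{\L,\bsh}$ is the loop-weighted probability measure on Poisson configurations, and it remains to show that this connectivity probability decays exponentially in $\|x\|$ uniformly in $\L$ and $\beta$.

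To handle the connectivity event I would first replace $\mu_{\L,\bsh}$ by a more tractable measure via stochastic domination. Further expanding the hyperbolic factors $w(\gamma;\bsh)$ produces, at each site $x$, an independent Poisson process of ``field events'' at rate at least $\alpha$ (using $h_x\ge\alpha$), and one shows that such an event essentially forbids the loop from crossing $x$ at that moment. The connectivity event $\{(0,0)\sim(x,0)\}$ is then contained in the event that there exists a lattice path $0=x_0,x_1,\dots,x_n=x$ together with time-stamps such that each edge $x_ix_{i+1}$ bears a cross or bar near each time, and no field event interrupts the corresponding passage at intermediate sites. This is precisely the setup of first-passage percolation: the passage cost over a path of length $n$ has an exponential tail with rate proportional to $\alpha$, so a union bound over the at most $(2d)^n$ paths of length $n\ge\|x\|_1$ yields $\mu_{\L,\bsh}((0,0)\sim (x,0))\le Ce^{-c\|x\|}$ with $c>0$ depending only on $\alpha$, $S$ and $d$.

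The main obstacle is the stochastic-domination step. The loop weights $w(\gamma;\bsh)$ are not a product over local quantities, and the loop through $(0,0)$ may wind around the time-circle many times, so a given field event can potentially be evaded by rerouting via the periodic time direction. Making the heuristic above rigorous and uniform in $\beta$, so that the FPP bound really does apply to the full loop-weighted measure, is the delicate part and is what forces the authors to use stochastic comparison and first-passage percolation rather than a direct Peierls-type or cluster expansion argument.
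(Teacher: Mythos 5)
The proposal goes wrong at the very first quantitative step: the bound $0\le w'/w\le 1$, which reduces the problem to estimating a bare connectivity probability $\mu_{\L,\bsh}\bigl((0,0)\sim(x,0)\bigr)$, discards precisely the factor that carries the decay. The paper's version of this step (Corollary~\ref{loop_cor}) keeps an exponentially small factor:
\[
\langle S_{0}^{1} S_{x}^{1} \rangle_{\L,\bsh}
\;\le\; \tfrac13 S(S+1)(2S+1)\,
\EE_\bsh\Bigl(1_{0 \leftrightarrow x}\;
\e{-\sum_{y} h_{y}\,\ell^{+}_{y}(\gamma_{0,x})}\Bigr),
\]
and this factor $\e{-\sum_y h_y\ell_y^+}$ is where the decay comes from. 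Your $w'$ is also not the right object: the two $S^1$ insertions split $\gamma_{0,x}$ into two arcs with distinct $S^3$ labels, so the modified weight must depend on the one-sided lengths $\ell_y^\pm(\gamma_{0,x})$ separately (this is the quantity $\tilde z_\bsh$ of the paper), not on $\ell_y(\gamma_{0,x})$ alone. Once the exponential factor is dropped, the proof cannot close: the loop weights $z_\bsh(\gamma)$, e.g.\ $2\cosh\bigl(\tfrac12\sum_y h_y\ell_y(\gamma)\bigr)$ for $S=\tfrac12$, are superadditive under loop merging, so turning on the field \emph{increases} the $\mu_{\L,\bsh}$-probability that $0$ and $x$ are on the same loop. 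There is no hope of $\mu_{\L,\bsh}(0\sim x)$ decaying exponentially uniformly in $\beta$; indeed at $h\equiv 0$ the loops are expected to percolate for large $\beta$ in $d\geq3$.

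The subsequent ``field events'' heuristic is therefore addressing the wrong problem, and it also isn't the right mechanism for this model. In the transverse-field Ising model one can indeed expand the field into a Poisson process of ``cuts'' that sever worldlines; here the field is parallel to the quantization axis and enters only multiplicatively through the loop lengths. There are no Poisson marks that ``forbid the loop from crossing $x$''; nothing in the loop measure prevents a loop from passing through a given space--time point. What the paper does instead is: (i) retain the factor $\e{-\sum_y h_y\ell^+_y}$; (ii) discretize $\L\times[0,\b)$ into intervals of height $\d$, call an interval good if it has no bridge endpoint and if the field there exceeds $\a$; (iii) observe that if every path in the discretized graph from $0$ to $x$ meets at least $\varphi\|x\|$ good intervals (a first-passage-time lower bound), then any connecting loop must have vertical length at least of order $\a\d\varphi\|x\|$, so the retained factor is exponentially small; (iv) bound the probability of a small passage time by stochastic domination (Georgii--K\"uneth to pass to an i.i.d.\ Poisson process of intensity $2S+1$, plus restriction to the even sublattice to get genuine independence) and a Kesten-type FPP estimate. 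Your outline captures (ii)--(iv) in spirit but it is hung on the wrong inequality at step (i), and without that factor the FPP argument has nothing to bite on.
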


Let us remark that similar results follow from the Lee--Yang theorem, as observed by Lebowitz and Penrose \cite{LP}. Let $s\bsh = (sh_{x})$ with $s \in \bbC$. It can be shown that the two point function $\langle S_{0}^{1}
S_{x}^{1} \rangle_{\Lambda,s\bsh}$ is analytic in $s$ when $\Re s \neq 0$. Assume that $\bsh$ is such that the thermodynamic limit $\langle S_{0}^{1}
S_{x}^{1} \rangle_{s\bsh}$ exists. The inverse correlation length
\be
\xi^{-1}(s) = -\limsup_{\|x\|\to\infty} \frac1{\|x\|} 
\log \bigl| \langle S_{0}^{1} S_{x}^{1} \rangle_{s\bsh} \bigr|
\ee
is therefore subharmonic. A cluster expansion shows that
$\xi^{-1}(s) > 0$ when $\Re s$ is large; then it never vanishes in
the domain of analyticity. We refer \cite{FR, LP, Pfi} for more
information.

Our proof of Theorem \ref{decay_thm} is new and
very different. We use the
random-loop-representation of \cite{AN, Toth, Uel1} in order to obtain
a suitable expression for the two-point function. It can be bounded by
the two-point function of a model of dependent percolation. Stochastic
domination allows to remove dependence, and our theorem follows from
results about first-passage percolation. This method of proof seems more robust.

The second result deals with a quenched disordered system where the
$h_{x}$s are i.i.d.\ random variables taking values in
$[0,\infty)$. Let $\bE$ denote expectation with respect to the
magnetic fields $\bsh$. The transverse two-point function is defined as 
\be
\llangle S_{0}^{1} S_{x}^{1} \rrangle_{\Lambda} = \bE \biggl(
\frac1{Z(\beta,\Lambda,\bsh)} \Tr S_{0}^{1} S_{x}^{1} \e{-\beta
H_{\Lambda,\bsh}} \biggr).  
\ee

We allow a small fraction of magnetic fields to be zero. The Lee--Yang method does not seem to apply any more. Our result is not uniform in $\beta$; the situation of the ground state remains to be clarified.

\begin{theorem}
\label{thm quenched decay}
For every $S, d$, $\b$ there exists $\varepsilon>0$ such that, if
$\bbP(h_{x} < \alpha) < \varepsilon$ for some $\alpha>0$, there exist
$C,c>0$ (they depend on $S,d,\beta,\alpha$ but not on $L$) such that
\[
0 < \llangle S_{0}^{1} S_{x}^{1} \rrangle_{\Lambda} \leq C \e{-c \|x\|}
\]
for all $x\in\Lambda$.
\end{theorem}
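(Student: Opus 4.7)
The strategy is to reduce Theorem~\ref{thm quenched decay} to Theorem~\ref{decay_thm} by stochastic domination on the disorder followed by a block renormalization. The loop representation writes $\langle S_0^1 S_x^1\rangle_{\L,\bsh}$ as (a constant multiple of) the probability of a connection event in a Poissonian loop soup on $\L\times[0,\b]$ in which on-site ``cuts'' appear at each site $y$ with rate $h_y$. As already used in the proof of Theorem~\ref{decay_thm}, this connection probability is monotone decreasing in each $h_y$ (removing cuts at $y$ can only merge loops). Consequently, truncating
\[
\tilde h_y = \a \, \bbone\{h_y \geq \a\} \;\leq\; h_y
\]
gives the pointwise bound $\langle S_0^1 S_x^1\rangle_{\L,\bsh} \leq \langle S_0^1 S_x^1\rangle_{\L,\tilde\bsh}$ for every realisation of $\bsh$.

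The truncated field $\tilde\bsh$ is an i.i.d.\ Bernoulli labelling of $\L$: each site is ``good'' (cut rate $\a$) with probability $p = \bbP(h_y\geq\a) > 1-\eps$, and ``bad'' (cut rate $0$) otherwise. I would first fix a block scale $\ell = \ell(\b,\a,d)$ large enough that the FPP/loop estimate of Theorem~\ref{decay_thm}, applied inside an all-good block of side $\ell$, already produces an exponential factor $\e{-c'\ell}$ per loop crossing of the block. Then, choosing $\eps>0$ small enough (depending on $\ell$, hence on $\b$) so that $1-p^{\ell^{d}}$ lies below the critical site-percolation threshold on $\bbZ^{d}$, the set of non-all-good blocks is strongly subcritical on the block lattice. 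A standard Peierls contour argument then forces any connection between $0$ and $x$ to cross at least $c\|x\|/\ell$ disjoint all-good blocks, each contributing an independent exponential factor $\e{-c'\ell}$. These combine to yield $C\e{-c\|x\|}$ pathwise in $\bsh$; integrating against $\bE$ and using positivity of the integrand yields the theorem.

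\textbf{Main obstacle.} The delicate point is to ensure that the per-block contribution is genuinely independent of the loop configuration outside the block. In the setting of Theorem~\ref{decay_thm} the uniform cut rate allows a direct first-passage estimate; in the presence of zero-cut (bad) regions one must verify that loops cannot ``funnel'' through bad clusters so as to avoid the good-block exponential suppression. This will require a careful Markov-type conditioning on the loop process at the block boundaries, together with the quantitative subcriticality afforded by small $\eps$. The dependence of $\eps$ on $\b$ that the statement permits is precisely what lets the block scale $\ell$ grow with $\b$, accommodating the longer loops that arise at low temperature.
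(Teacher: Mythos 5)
Your starting premise is based on a misreading of the loop representation. In this paper the field $\bsh$ does \emph{not} enter as a Poisson process of on-site ``cuts'': the only randomness in the loop geometry is the process $\omega$ of crosses and double bars (with intensities $u$ and $1-u$, independent of $\bsh$). The field enters only through the loop weights $z_\bsh(\gamma)=\sum_{a=-S}^S\exp(a\sum_y h_y\ell_y(\gamma))$, which reweight the measure $\rho$ but do not change the loop structure. Because the weight involves a sum over both positive and negative $a$, there is no obvious monotonicity of $\langle S_0^1 S_x^1\rangle_{\L,\bsh}$ in the $h_y$'s, and the paper's proof of Theorem~\ref{decay_thm} does not use (nor claim) such monotonicity. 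So the pointwise bound $\langle S_0^1 S_x^1\rangle_{\L,\bsh}\leq\langle S_0^1 S_x^1\rangle_{\L,\tilde\bsh}$ on which your whole truncation step rests is unjustified, and the reduction to a two-valued field does not go through.

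Even if the monotonicity held, your reduction to Theorem~\ref{decay_thm} is not immediate: that theorem needs a \emph{uniform} lower bound $h_x\geq\alpha$ on every site, whereas the truncated field $\tilde\bsh$ has a positive density of zero-field sites. What you would then need is precisely a quenched version of the first-passage estimate in the mixed good/bad environment, which is the whole content of the theorem, and your block/Peierls plan leaves exactly that step (independence across blocks, funneling through bad regions) open. The paper's route is different and avoids the issue: starting from Corollary~\ref{loop_cor}, it discretizes $\L\times[0,\beta)$ into intervals labelled good/bad according to both $h$ and $\omega$, bounds the event $0\lra x$ by a first-passage-percolation time $T_\L(x)$ on the even sublattice, and then removes the dependence on the complicated density $\prod_\gamma z_\bsh(\gamma)$ by stochastic domination (the Georgii--K\"uneth comparison, Lemma~\ref{preston_lem}) plus a simple coupling on the $\bsh$-labels. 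The resulting i.i.d.\ or $1$-dependent passage times are then controlled by a Kesten-type FPP estimate. The crucial technical hurdle -- dominating the loop-weighted measure by a Poisson process of higher intensity $\theta=2S+1$ via the ratio bound~\eqref{add_bridge_eq} -- is absent from your proposal and would have to be supplied for any variant of your approach as well.
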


Straightforward modifications of our argument also gives
the same result for the Schwinger functions
$\langle S_0^1 e^{-(\b-t)H_{\L,\bsh}}S_x^1e^{-t H_{\L,\bsh}}\rangle_{\L,\bsh}$
and 
$\llangle S_0^1 e^{-(\b-t)H_{\L,\bsh}}S_x^1e^{-t H_{\L,\bsh}}\rrangle_{\L}$.
Under the assumptions of Theorem~\ref{decay_thm} 
we can show that for all $t \in [0,\beta]$,
\begin{equation}\label{sch_eq}
\langle S_0^1 e^{-(\b-t)H_\L}S_x^1e^{-t H_{\L,\bsh}}\rangle_{\L,\bsh}
< C\e{-c(\|x\|+t)}.
\end{equation}
The constants $C,c$ are positive and they do not depend on $L,\beta,t,x$.
Under the assumptions of 
Theorem~\ref{thm quenched decay} we obtain a similar upper bound on 
$\llangle S_0^1 e^{-(\b-t)H_\L}S_x^1e^{-t H_{\L,\bsh}}\rrangle_{\L}$
(in this case, the constants are not uniform in $\b$).

We explain the random loop representation in Section \ref{sec random loops} and use it to prove Theorems \ref{decay_thm} and \ref{thm quenched decay} in Section \ref{sec proofs}.

\section{Random loops}
\label{sec random loops}

We now describe an ensemble of random loops.
Its relevance for the spin system
is explained in Theorem~\ref{thm repr}.

The loops live in $\L\times[0,\b)$, and
we  regard the interval $[0,\b)$ as a circle 
of length $\b$.  Points in $\L\times\{0\}$
are identified with the corresponding elements
of $\L$ and denoted $0,x$ et.c.\
We consider
two independent Poisson processes in the set
$\cE_\L\times[0,\b)$.
The first process 
has intensity $u$ and is called the process 
of \emph{crosses};  the second
process has intensity $1-u$
and is called the process of \emph{double bars} 
(or \emph{bars} for short).  The joint 
realization of bars and crosses is denoted by $\om$ and
its distribution is denoted by $\rho$.  
Note that $\om$, taken as a whole,
is a realization of a Poisson process of intensity 1.

The realization $\om$ decomposes $\L\times[0,\b)$ into
a collection of disjoint loops.  Informally, these loops are
obtained as follows.   One starts at 
a point $(x,t)\in\L\times[0,\b)$
and proceeds `upwards' (or `downwards')
until hitting the endpoint of 
a bar or a cross.  One then
 moves to the other endpoint and  
proceeds in the same direction if it was a cross, 
alternatively  changes direction  it was a bar.
The loop is completed when one returns to the starting
point $(x,t)$.  We write $\cL(\om)$ for the collection of
loops defined by $\om$.  For more details, and illustrations, 
see~\cite{Uel1}.

Let us define the relevant loop activities. Given
$\gamma \in \caL(\omega)$, let $\ell_{y}(\gamma)$ denote the
vertical length of $\g$ at the site $y$
(that is, the length of $\g\cap(\{y\}\times[0,\b))$). 
Notice the following identity, that holds for all realizations $\omega$:
\be
\sum_{\gamma \in \caL(\omega)} \sum_{y \in \Lambda} \ell_{y}(\gamma) = \beta |\Lambda|.
\ee
If there is a loop $\gamma_{0,x}\in\cL(\om)$  that
contains both $0$ and $x$, we let $\ell^{+}_{y}(\gamma_{0,x})$ denote
the vertical length at $y$ of the component of the loop that links
$(0, 0+)$ with $(x, 0\pm)$; 
that is, the component obtained by starting
in the `upwards' direction at $(0,0)$ and continuing until
the first visit to $(x,0)$.  We also let
$\ell^{-}_{y}(\gamma_{0,x})$ denote the length at $y$ of the other component
that links $(0, 0-)$ with $(x, 0\pm)$;  note that 
$\ell_{y}(\g_{0,x})= \ell^{+}_{y}(\g_{0,x}) + \ell^{-}_{y}(\g_{0,x})$. 
Define
\be
\begin{split}
& z_\bsh(\gamma) = \sum_{a=-S}^{S} 
\exp\Big( a \sum_{y} h_{y} \ell_{y}(\gamma)\Big), \\ 
& \tilde z_\bsh(\gamma_{0,x})
= \tfrac14 \sum_{a=-S}^{S-1} \bigl( S(S+1) -
a(a+1) \bigr) \biggl[ \exp \Big(
(a+1) \sum_{y} h_{y} \ell^{+}_{y}(\gamma_{0,x}) + a \sum_{y} h_{y} \ell^{-}_{y}(\gamma_{0,x}) \Big) \\
& \hspace{3cm} + \exp \Big(
a \sum_{y} h_{y} \ell^{+}_{y}(\gamma_{0,x}) + (a+1) \sum_{y} h_{y} \ell^{-}_{y}(\gamma_{0,x}) \Big) \biggr].
\end{split}
\ee
(Here and in all similar sums the index $a$ increases in steps 
of size 1.)  We write 
$1_{0 \leftrightarrow x}(\omega)$ for the indicator that
$0$ and $x$ belong to the same loop $\g_{0,x}\in\cL(\om)$.

\begin{theorem}
\label{thm repr}
The partition function and the two-point function have the 
following representations.
\begin{itemize}
\item[(a)] $\displaystyle Z(\beta,\Lambda,\bsh) 
= \int\rho(\dd\omega) \prod_{\gamma \in \caL(\omega)} z_\bsh(\gamma)$.
\item[(b)] $\displaystyle \langle S_{0}^{1} S_{x}^{1} \rangle_{\L,\bsh} 
= \frac1{Z(\beta,\Lambda,\bsh)} \int\rho(\dd\omega) \; 
1_{0 \leftrightarrow x}(\omega) \; \tilde z_\bsh(\gamma_{0x}) 
\prod_{\gamma \in \caL(\omega) \setminus \{\gamma_{0x}\}} z_\bsh(\gamma)$.
\end{itemize}
\end{theorem}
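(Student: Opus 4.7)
My plan is to derive both formulas from a single Dyson/Trotter expansion of $e^{-\b H_{\L,\bsh}}$ in the interaction picture with respect to the diagonal field $V = -\sum_x h_x S_x^3$. Writing $H_{\L,\bsh} = V + H_{\mathrm{bd}}$ with $H_{\mathrm{bd}} = -\sum_{xy\in\caE_\L}(uT_{xy}+(1-u)Q_{xy}-1)$, the standard Poisson representation expresses $e^{-\b H_{\L,\bsh}}$ as an integral over $\rho$ of a time-ordered product: for each cross $(xy,t)\in\om$ insert the operator $T_{xy}$, for each bar insert $Q_{xy}$, and between successive events evolve by $e^{-\Delta t\, V}$. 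The constant $+1$ per edge in $H_{\mathrm{bd}}$ exactly absorbs the $e^{-\b|\caE_\L|}$ normalization of the intensity-$1$ Poisson process $\rho$, so no extra prefactor appears.

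I would then compute the trace in the time-sliced spin-eigenbasis $\otimes_y|a_y(t)\rangle$. The matrix elements $\langle a,b|T_{xy}|c,d\rangle = \delta_{a,d}\delta_{b,c}$ and $\langle a,b|Q_{xy}|c,d\rangle = \delta_{a,b}\delta_{c,d}$ impose, at a cross, a swap of the spin values carried by the $x$- and $y$-strands, and at a bar, the two equalities $a_x(t^-) = a_y(t^-)$ and $a_x(t^+) = a_y(t^+)$. These local rules glue strands together exactly in the way the loops of $\caL(\om)$ are built, so that globally consistent spin assignments are in one-to-one correspondence with choices of a single value $a(\g)\in\{-S,\dots,S\}$ per loop $\g\in\caL(\om)$. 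On any strand of $\g$ at site $y$ of length $\Delta t$, the diagonal evolution contributes $e^{a(\g)h_y \Delta t}$; aggregating over all strands of $\g$ yields $\exp\bigl(a(\g)\sum_y h_y \ell_y(\g)\bigr)$. Summing over $a(\g)$ independently for each loop produces $\prod_\g z_\bsh(\g)$, which proves (a).

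For (b) I repeat the same expansion for $\Tr(S_0^1 S_x^1 e^{-\b H_{\L,\bsh}})$. Since $S^1 = \tfrac12(S^++S^-)$ is strictly off-diagonal with $\langle a\pm1|S^1|a\rangle = \tfrac12\sqrt{S(S+1)-a(a\pm1)}$, the only change in the analysis is that the spin assignment is now permitted to jump by $\pm 1$ at the two points $(0,0)$ and $(x,0)$ and is otherwise still constant along loops. Consistency forces these two jumps to occur on a common loop (so $1_{0\leftrightarrow x}$ is required) and to have opposite signs; the insertions split $\g_{0,x}$ into the two arcs $\g^\pm$ of vertical lengths $\ell^\pm_y$, carrying values $a$ and $a+1$ in one of two possible orderings, for some $a\in\{-S,\dots,S-1\}$. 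Each ordering contributes one of the two exponentials in $\tilde z_\bsh(\g_{0,x})$, and the product of the two ladder matrix elements yields the prefactor $\tfrac14[S(S+1)-a(a+1)]$. Loops other than $\g_{0,x}$ remain unconstrained and contribute $z_\bsh(\g)$ exactly as in (a), which after dividing by $Z(\b,\L,\bsh)$ gives (b).

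The main obstacle is the bookkeeping in the second paragraph: one must verify that the swap/equality constraints from $T_{xy}$ and $Q_{xy}$ really do make consistent spin labellings of the spacetime configuration correspond bijectively to $\{-S,\dots,S\}$-valued labellings of the loops defined by the informal upwards/downwards rule of Section~\ref{sec random loops}. The bar case in particular needs a moment of thought: the two equalities $a_x(t^\pm)=a_y(t^\pm)$ encode the fact that at a bar the loop changes direction, connecting the two below-strands and the two above-strands pairwise. Once this bijection is in place and the $S^1$-matrix elements have been computed, both (a) and (b) reduce to elementary summations.
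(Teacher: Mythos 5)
Your proposal is correct and follows essentially the same route as the paper: a Trotter/Poisson expansion of $e^{-\beta H_{\Lambda,\bsh}}$ with $T_{xy}$ and $Q_{xy}$ inserted at crosses and bars, a resolution of the identity in the $S^3$-eigenbasis, identification of compatible space-time spin configurations with constant-on-loop labellings, and the off-diagonal matrix elements of $S^1$ to produce the $\tfrac14[S(S+1)-a(a+1)]$ factor and the two-arc exponentials in $\tilde z_{\bsh}$. The only cosmetic difference is that you phrase the expansion in interaction-picture/Dyson language while the paper writes out the Trotter product explicitly; the constructions coincide.
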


This theorem builds on~\cite[Theorems~3.2 and~3.3]{Uel1},
and is proved at the end of this section. Notice that (b)
shows that the two-point function is positive. 

The following corollary
`essentially' shows exponential decay --- but it takes a surprising
effort in order to turn it into a rigorous proof.
We write $\EE_\bsh$ for expectation with respect to the probability 
measure with density proportional
to $\prod_{\gamma \in \caL(\omega)} z_\bsh(\gamma)$ with respect
to $\rho$.

\begin{corollary}\label{loop_cor}
We have the estimate
\[
\langle S_{0}^{1} S_{x}^{1} \rangle_{\L,\bsh} 
\leq \tfrac13 S (S+1) (2S+1) \; 
\bbE_\bsh\Bigl( \; 1_{0 \leftrightarrow x}(\omega) \; 
\e{-\sum_{y} h_{y} \ell^{+}_{y}(\gamma_{0,x})} \Bigr).
\]
\end{corollary}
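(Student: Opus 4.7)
The plan is to use Theorem~\ref{thm repr}(b) to rewrite the two-point function as an $\bbE_\bsh$-expectation and then reduce the corollary to a pointwise bound on $\tilde z_\bsh(\gamma_{0,x})/z_\bsh(\gamma_{0,x})$. Multiplying and dividing by $z_\bsh(\gamma_{0,x})$ in the integrand of (b) gives
\[
\langle S_0^1 S_x^1\rangle_{\L,\bsh}=\bbE_\bsh\bigl(1_{0\lra x}(\omega)\,\tilde z_\bsh(\gamma_{0,x})/z_\bsh(\gamma_{0,x})\bigr),
\]
so it suffices to control the activity ratio along the distinguished loop, and then take expectations.

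Write $L^\pm=\sum_y h_y\ell^\pm_y(\gamma_{0,x})$ and $L=L^++L^-$. The two exponentials in each summand of $\tilde z_\bsh$ share the common factor $e^{aL}$ because $e^{(a+1)L^++aL^-}=e^{L^+}e^{aL}$ and $e^{aL^++(a+1)L^-}=e^{L^-}e^{aL}$, so
\[
\tilde z_\bsh(\gamma_{0,x})=\tfrac14\bigl(e^{L^+}+e^{L^-}\bigr)\sum_{a=-S}^{S-1}(S-a)(S+a+1)\,e^{aL}.
\]
Shifting the summation index to $b=a+1$, and observing that $d_b:=(S-b+1)(S+b)$ vanishes at $b=-S$, one can extend the range and rewrite the inner sum as $e^{-L}\sum_{b=-S}^{S}d_b\,e^{bL}$. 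The factor $e^{-L}$ combines with $(e^{L^+}+e^{L^-})$ to produce the \emph{decaying} combination $(e^{-L^+}+e^{-L^-})$, and dividing by $z_\bsh(\gamma_{0,x})=\sum_{b=-S}^{S}e^{bL}$ gives
\[
\frac{\tilde z_\bsh(\gamma_{0,x})}{z_\bsh(\gamma_{0,x})}=\tfrac14\bigl(e^{-L^+}+e^{-L^-}\bigr)\,\frac{\sum_b d_b\,e^{bL}}{\sum_b e^{bL}}\leq\tfrac14\bigl(e^{-L^+}+e^{-L^-}\bigr)\max_b d_b,
\]
since the second factor is a convex combination of the $d_b$. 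Because $d_b=(S+\tfrac12)^2-(b-\tfrac12)^2$, we have $\max_b d_b\leq(S+\tfrac12)^2$.

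To conclude, two elementary ingredients are needed. The first is the cubic inequality $(S+\tfrac12)^2\leq\tfrac23 S(S+1)(2S+1)$, valid for every $S\in\tfrac12\bbN$ (equality at $S=\tfrac12$, strict otherwise, checked by expanding both sides). The second is the reflection symmetry of the loop measure at $(0,0)$: reversing the orientation of $\gamma_{0,x}$ interchanges $\ell^+$ and $\ell^-$ while preserving the distribution of $\omega$ on the event $\{0\lra x\}$, hence $\bbE_\bsh(1_{0\lra x}e^{-L^+})=\bbE_\bsh(1_{0\lra x}e^{-L^-})$. Taking expectations of the pointwise bound and applying this symmetry turns $(e^{-L^+}+e^{-L^-})$ into $2\,\bbE_\bsh(1_{0\lra x}e^{-L^+})$, and the overall constant becomes $\tfrac14\cdot 2\cdot\tfrac23 S(S+1)(2S+1)=\tfrac13 S(S+1)(2S+1)$, which matches the corollary exactly.

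The main subtlety is the algebraic rearrangement: the naive bound $(e^{L^+}+e^{L^-})$ does not decay, and only after pulling out an extra $e^{-L}$ through the index shift $b=a+1$ does the decaying combination $(e^{-L^+}+e^{-L^-})$ appear. The reflection symmetry is then essential to convert this symmetric two-sided bound into the one-sided factor $e^{-L^+}$ appearing in the statement.
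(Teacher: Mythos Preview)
Your proof is correct and arrives at the same pointwise estimate
$\tilde z_\bsh(\gamma_{0,x})/z_\bsh(\gamma_{0,x})\leq\tfrac16 S(S+1)(2S+1)\bigl(e^{-L^+}+e^{-L^-}\bigr)$
as the paper, followed by the same symmetry step to merge the two exponentials. The algebra en route is different, however. The paper argues more crudely: it bounds $z_\bsh(\gamma)\geq e^{SL}$ by keeping only the term $a=S$, and in $\tilde z_\bsh$ it bounds every exponential by the largest one (the summand $a=S-1$), so that the coefficients decouple and sum to the closed form $\sum_{a=-S}^{S-1}\bigl(S(S+1)-a(a+1)\bigr)=\tfrac23 S(S+1)(2S+1)$. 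Your index shift $b=a+1$ together with the convex-combination bound is a little more work but in fact yields the sharper constant $\max_b d_b\leq(S+\tfrac12)^2$ (and even $S(S+1)$ when $S$ is an integer), which you then relax to match the stated constant; for $S>\tfrac12$ your argument therefore proves a strictly stronger inequality than the corollary claims. In short, the paper's route is quicker, while yours extracts a better constant. One minor remark: the ``reflection symmetry'' you invoke is most cleanly phrased as time reversal $t\mapsto -t$ (mod~$\beta$) of the whole configuration $\omega$, which preserves the law under $\bbE_\bsh$ and swaps the two arcs of $\gamma_{0,x}$; the paper leaves this step entirely implicit.
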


\begin{proof}
By Theorem~\ref{thm repr} we have 
$\langle S_{0}^{1} S_{x}^{1} \rangle_{\Lambda,\bsh} = 
\bbE_\bsh\bigl( 1_{0 \leftrightarrow x}(\omega) 
\tilde z_\bsh(\gamma_{0,x}) / z_\bsh(\gamma_{0,x}) \bigr)$.
The loop activity $z_{\bsh}$ satisfies the lower bound
\be
z_\bsh(\gamma) \geq \e{S \sum_{y} h_{y} \ell_{y}(\gamma)}.
\ee
As for $\tilde z_\bsh$, we have the upper bound
\bm
\tilde z_\bsh(\g_{0,x}) \leq \tfrac14 
\Bigl( \e{ \sum_{y} h_{y} ( S \ell^{+}_{y}(\g_{0,x}) + (S-1) \ell^{-}_{y}(\g_{0,x}))} 
+ \e{ \sum_{y} h_{y} ( (S-1) \ell^{+}_{y}(\g_{0,x}) + S \ell^{-}_{y}(\g_{0,x}))} \Bigr) \\
\cdot \sum_{a=-S}^{S-1} \bigl( (S(S+1) - a(a+1) \bigr).
\end{multline}
One can check that the latter sum is equal to $\frac23 S (S+1) (2S+1)$. 
Thus
\be
\tilde z_\bsh(\g_{0,x}) \leq \tfrac16 S (S+1) (2S+1) 
\e{S \sum_{y} h_{y} \ell_{y}(\g_{0,x})} 
\Bigr( \e{- \sum_{y} h_{y} \ell^{+}_{y}(\g_{0,x})} + 
\e{- \sum_{y} h_{y} \ell^{-}_{y}(\g_{0,x})} \Bigr).
\ee
The corollary follows.
\end{proof}

\begin{proof}[Proof of Theorem \ref{thm repr}]
By the Trotter product formula, with $\rho$ the Poisson 
point process described above, we have
\be
\begin{split}
&\e{-\beta H_{\Lambda,\bsh}} = 
\lim_{N\to\infty} \biggl[ \Bigl( 1 - \tfrac1N |\caE| + 
\frac1N \sum_{xy \in \caE} \bigl( u T_{xy} + (1-u) Q_{xy} \bigr) \Bigr) 
\e{\frac1N \sum_{x} h_{x} S_{x}^{3}} \biggr]^{\beta N} \\
&= \int\rho(\dd\omega) 
\e{(\beta-t_{n}) \sum_x h_{x} S_{x}^{3}} R_{x_{n} y_{n}} 
\e{(t_{n}-t_{n-1}) \sum_x h_{x} S_{x}^{3}} \dots 
\e{(t_{2}-t_{1}) \sum_x h_{x} S_{x}^{3}} R_{x_{1} y_{1}} 
\e{t_{1} \sum_x h_{x} S_{x}^{3}}.
\end{split}
\ee
Here, $(t_{i}; x_{i}, y_{i})$ are the times and locations of the
outcomes of the realization $\omega$, ordered so that 
$0 < t_{1}< \dots < t_{n} < \beta$. 
The operator $R_{x_{i} y_{i}}$ is equal to
$T_{x_{i} y_{i}}$ if the outcome $(t_{i}; x_{i}, y_{i})$ is a cross;
it is equal to $Q_{x_{i} y_{i}}$ if the outcome is a double bar.

Inserting the expansion of unity $\bbone
= \sum_{(\sigma_{x})} \otimes_{x}
|\sigma_{x}\rangle \langle \sigma_{x}|$, with
$\sigma_{x} \in \{-S,\dots,S\}$, one obtains
\be
Z(\beta,\Lambda,\bsh) = \Tr \e{-\beta H_{\Lambda,\bsh}} 
= \int\rho(\dd\omega) \sum_{\sigma \in \Sigma(\omega)} 
\exp\Bigl\{ \int_{0}^{\beta} \dd t \sum_{x} h_{x} \sigma_{x}(t) \Bigr\}.
\ee
The last sum is over `space-time spin configurations' 
$\sigma=(\s_x:x\in\L)$ that are compatible with $\omega$. 
That is, $\sigma$ is a (periodic) function 
$[0,\beta) \to \{-S,-S+1,\dots,S\}^{\Lambda}$,
and satisfies
\begin{itemize}
\item $\sigma(t)$ is constant except possibly 
at times $t_{1}, \dots, t_{n}$;
\item At those times, we have 
$\langle \sigma(t_{i}+)| R_{x_{i} y_{i}} |\sigma(t_{i}-)\rangle = 1$.
\end{itemize}
It is not hard to check that $\sigma \in \Sigma(\omega)$ if and only
if the spin values are constant on each loop of $\caL(\omega)$. The
sum over space-time spin configurations factorize according to the
loops, and we get the claim (a) of Theorem \ref{thm repr}.

For the correlation function, let $\Sigma_{0,x}(\omega)$ be the set of
space-time configurations $[0,\beta) \to \{-S,\dots,S\}^{\Lambda}$
that satisfy
\begin{itemize}
\item $\sigma(t)$ is constant except possibly 
at times $0,t_{1}, \dots, t_{n}$;
\item At times $t_{1}, \dots, t_{n}$, we have 
$\langle \sigma(t_{i}+)| R_{x_{i} y_{i}} |\sigma(t_{i}-)\rangle = 1$;
\item At time 0, we have
\be
\sigma_{y}(0+) = \begin{cases} 
\sigma_{y}(0-), & \text{if } y \neq 0,x; \\ 
\sigma_{y}(0-) \pm 1, & \text{if } y = 0 \text{ or } x. 
\end{cases}
\ee
\end{itemize}
Since $\langle a| S^{1} |b\rangle = \frac12 \sqrt{S(S+1) - ab}$ if $a
= b\pm1$, and 0 otherwise, we have
\be
\Tr S_{0}^{1} S_{x}^{1} \e{-\beta H_{\Lambda,\bsh}} 
= \tfrac14 \int\rho(\dd\omega) 1_{0 \leftrightarrow x}(\omega) 
\sum_{\sigma \in \Sigma_{0.x}(\omega)} 
\bigl( S(S+1) - \sigma_{0}(0-) \sigma_{0}(0+) \bigr) 
\e{\int_{0}^{\beta} \dd t \sum_{y} h_{y} \sigma_{y}(t)}.
\ee
Extracting the contribution of the loop $\gamma_{0,x}$, and using the
definition of $\tilde z_\bsh(\gamma_{0,x})$, one obtains 
Theorem~\ref{thm repr} (b).
\end{proof}

The step from Corollary~\ref{loop_cor} to
Theorems~\ref{decay_thm} and~\ref{thm quenched decay} 
is intuitively  clear:
on the event  $0\lra x$ we expect $\g_{0,x}$
to have length proportional to $\|x\|$, so the
right-hand-side in   Corollary~\ref{loop_cor} 
should decay exponentially in $\|x\|$.
The difficulty is that $\ell_y^+(\g_{0,x})$ denotes 
\emph{vertical} length.  For any $\eps>0$ it 
is possible for
a loop to reach from 0 to $x$, yet still
have total vertical length at most $\eps$.  This
seems unlikely when $\eps$ is small, 
but obtaining a quantitative
statement requires dealing with
the dependencies under $\EE_\bsh$.

\section{Proofs}
\label{sec proofs}

We begin by noting that both 
$\langle S_0^1S_x^1\rangle_{\L,\bsh}$ and 
$\llangle S_0^1S_x^1\rrangle_{\L}$
can be written in the
 general form
$\bE(\langle S_0^1S_x^1\rangle_{\L,\bsh})$, 
where now $\bE$ is a measure governing the vector 
$\bsh$ under which the
$h_x$ are independent (but not necessarily identically distributed).  
Indeed,  $\llangle S_0^1S_x^1\rrangle_{\L}$
is obtained by letting the $h_x$ be identically distributed, whereas 
$\langle S_0^1S_x^1\rangle_{\L,\bsh}$ is obtained when $\bE$ is the
degenerate measure under which the $h_x$ are almost surely 
constant.  In
either case, by Corollary~\ref{loop_cor} the two-point function
is bounded by a constant times
\be\label{2p-exp}
\bE\Bigl(\bbE_\bsh\Bigl( \; 1_{0 \leftrightarrow x}(\omega) \; 
\e{-\sum_{y} h_{y} \ell^{+}_{y}(\gamma_{0,x})} \Bigr)\Bigr).
\ee
We focus on bounding the quantity~\eqref{2p-exp}, and at the end
deduce Theorems~\ref{decay_thm} 
and~\ref{thm quenched decay} by specializing to the specific
choices for $\bE$.

Let $\d>0$ be such that that $N=\b/\d$ is an integer.  In what follows
we no longer need to distinguish between bars and crosses,
and we use the term \emph{bridges} to refer collectively to the two.  
We write $\G=\G(\L,\b,\d)$ for the collection of intervals
of the form
\[
I=\{x\}\times [k\d,(k+1)\d),\mbox{ for }
x\in\L\mbox{ and } 0\leq k\leq N-1.
\]
We view $\G$ as a graph, where intervals
$\{x\}\times [k\d,(k+1)\d)$ and
$\{y\}\times [\ell\d,(\ell+1)\d)$ are said to be \emph{adjacent} if
either 
\begin{enumerate}
\item $xy\in\cE_\L$ and $k=\ell$, or
\item $x=y$ and $k=\ell\pm 1$ (viewed modulo $N$).
\end{enumerate}
A \emph{path} $\pi$
in $\G$ is as usual a sequence of elements of $\G$ which
are consecutively adjacent in this sense, and any such path thus
corresponds to a sequence of `neighbouring' intervals.  

Fix $\a>0$, to be chosen later, and
let $I=\{x\}\times[k\d,(k+1)\d)\in\G$. 
Based on the random outcomes $\bsh$ and
$\om$ (i.e.\ the collection of bridges) we
will declare the interval $I$ to be
\begin{itemize}
\item $\bsh$-good if $h_x\geq\a$,
\item $\om$-good if there is no bridge with an endpoint in $I$, 
\item and \emph{good} if it is both $\bsh$-good and $\om$-good. 
\end{itemize} 
An interval which is not declared good is declared \emph{bad}.
We encode the collection of good and bad intervals as an element 
$\eta=(\eta(i):i\in\G)$ of $\{0,1\}^\G$, where 0 denotes bad and 1
denotes good.  This classification may be seen as a (dependent)
percolation process in $\G$.

It is convenient to use the fact that $\ZZ^d$ is bipartite:
We may write $\ZZ^d=\cA\cup\cB$ where
\[
\cA=\{(x_1,\dotsc,x_d)\in\ZZ^d:
x_1+\dotsb+x_d\equiv0\mbox{ (mod 2)}\}
\]
and $\cB=\ZZ^d\sm\cA$.  We refer to $\cA$
and $\cB$ as the even and odd sublattices,
respectively.  Bipartiteness refers to the fact that
a vertex in $\cA$ is only adjacent to 
vertices in $\cB$, and vice versa.
If $I=\{x\}\times [k\d,(k+1)\d)$ is an interval belonging 
to $\G$ we commit a small abuse of notation and write
$I\in\cA$ if $x\in\cA$.
We also simply write $x$ for the unique interval
$\{x\}\times[0,\d)$ of $\G$ containing $(x,0)$. 

We define the \emph{passage time} $T_\L(x)$ from $0$ to $x$ 
in $\L$ as
\begin{equation}\label{passage_eq}
T_\L(x)=\min_{\pi:0\to x} \sum_{i\in\pi\cap\cA} \eta(i),
\end{equation}
where the minimum is over all paths $\pi$
in $\G$ from 
$\{0\}\times[0,\d)$ to $\{x\}\times[0,\d)$.
Thus $T_\L(x)$ is the minimal number of good intervals,
indexed by the even sublattice,
on a path from 0 to $x$.   This is a slight variation of the standard
definition of a (point-to-point) passage time, where the sum would
usually go over \emph{all} points on $\pi$.  Summing over the even
sublattice $\cA$ only is a convenient way 
to avoid dependencies, as will be
explained below.

Let $\varphi>0$ be arbitrary, and assume that
the event $0\lra x$ occurs (so $\g_{0,x}$ is well-defined).  
If, in addition, $T_\L(x)\geq\varphi\|x\|$, then any path in $\G$
from 0 to $x$ contains at least $\varphi\|x\|$ good intervals.
In particular, 
it follows that  $\sum_y\ell_y^+(\g_{0,x})\geq \a\d(\varphi\|x\|-1)$.
(We subtract 1 for the last interval, which may contribute
less than $\a\d$ even if it is good.)
Thus we have that 
\begin{equation}\label{cases_eq}
\bE\Bigl(\bbE_\bsh\Bigl( \; 1_{0 \leftrightarrow x}(\omega) \; 
\e{-\sum_{y} h_{y} \ell^{+}_{y}(\gamma_{0,x})} \Bigr)\Bigr)
\leq
e^{-\a\d (\varphi\|x\|-1)}+\bE[\PP_\bsh(T_\L(x)<\varphi\|x\|)].
\end{equation}
The  theorems follow if we show that the
last term is exponentially small in $\|x\|$.

To establish this, we first simplify the
probability measure by using the theory of stochastic
domination.  We begin by defining a partial ordering.
We say that $\om\leq \tilde\om$ if one may 
obtain $\om$ from $\tilde\om$
by removing some bridges (in other words, the support of $\om$ is a
subset of the support of $\tilde\om$).
For any fixed $\bsh$, the event
$A=\{T_\L(x)<\varphi\|x\|\}$ is \emph{increasing}
in the ordering on $\om$, i.e.\ if $\om\leq\tilde\om$
and $\om\in A$ then necessarily $\tilde\om\in A$.
This allows us to use results on stochastic domination for
point processes~\cite{georgii-kuneth,preston},
to bound $\PP_\bsh(A)$.

The following result lets us get rid of the complicated density 
$\prod_{\g\in\cL(\om)}z_\bsh(\g)$ at the cost of increasing the intensity
of bridges.  Write $\theta=2S+1$.
\begin{lemma}\label{preston_lem}
Let $\PP'$ denote the probability measure under which the bridges
form a Poisson process of intensity $\theta$.
Then for any realization of $\bsh$ we have that
\[
\PP_\bsh(T_\L(x)<\varphi\|x\|)\leq \PP'(T_\L(x)<\varphi\|x\|).
\]
\end{lemma}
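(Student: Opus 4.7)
The plan is to invoke the stochastic domination theorem for point processes from \cite{georgii-kuneth,preston}: if two such processes have Papangelou (conditional) intensities $\lambda_1,\lambda_2$ satisfying $\lambda_1(u\mid\om)\leq\lambda_2(u\mid\om')$ whenever $\om\leq\om'$, then the former is stochastically dominated by the latter, hence $\nu_1(A)\leq\nu_2(A)$ for every event $A$ increasing in the ordering on configurations. For $\PP'$ the Papangelou intensity is the constant $\theta=2S+1$, so it suffices to fix $\bsh$ and verify that the Papangelou intensity of $\PP_\bsh$ satisfies $\lambda_\bsh(u\mid\om)\leq\theta$ for every prospective bridge $u$ and every $\om$. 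The event $\{T_\L(x)<\varphi\|x\|\}$ is increasing in $\om$: adding a bridge can only turn previously $\om$-good intervals bad, hence only decrease the minimum in~\eqref{passage_eq}.

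Since $\PP_\bsh$ has density proportional to $\prod_{\g\in\cL(\om)}z_\bsh(\g)$ with respect to the intensity-$1$ Poisson measure $\rho$, the Papangelou intensity at a new bridge $u=(xy,t)$ is the ratio of this product for $\om+\delta_u$ to that for $\om$. Geometrically, inserting $u$ either splits a loop $\g\in\cL(\om)$ into two loops $\g_1,\g_2$ (with $\ell_y(\g)=\ell_y(\g_1)+\ell_y(\g_2)$ for every $y$), or merges two loops $\g_1,\g_2$ into a single loop $\g$; in both cases only the factors of the affected loops change. Writing $\a_i=\sum_y h_y\ell_y(\g_i)\geq 0$ and $f(\a)=\sum_{k=0}^{2S}e^{-k\a}$, one has $z_\bsh(\g_i)=e^{S\a_i}f(\a_i)$, the exponentials cancel, and the ratio reduces to $f(\a_1+\a_2)/[f(\a_1)f(\a_2)]$ in the merge case and to $f(\a_1)f(\a_2)/f(\a_1+\a_2)$ in the split case. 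The merge case is immediate: $f\geq 1$ and $f$ is decreasing on $[0,\oo)$, so the ratio is at most $f(\a_1+\a_2)\leq f(0)=\theta$.

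The split case is where I expect the main work, since the naive bound $f(\a_i)\leq\theta$ yields only $\theta^2$. The key observation is that $f$, being a non-negative combination of exponentials, is log-convex on $[0,\oo)$, with $f(0)=\theta$ and $f$ decreasing. Setting $g=\log f-\log\theta$, the function $g$ is convex with $g(0)=0$ and $g\leq 0$; by convexity and $g(0)=0$, the ratio $\a\mapsto g(\a)/\a$ is non-decreasing on $(0,\oo)$, whence $g(\a_i)\leq\frac{\a_i}{\a_1+\a_2}g(\a_1+\a_2)$ for $i=1,2$. Summing gives the super-additivity $g(\a_1)+g(\a_2)\leq g(\a_1+\a_2)$, i.e.\ $f(\a_1)f(\a_2)\leq\theta f(\a_1+\a_2)$, as required; this step uses crucially that $\a_i\geq 0$, which holds because $h_y\geq 0$ throughout. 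With both cases handled, $\lambda_\bsh(u\mid\om)\leq\theta$ uniformly, and the stochastic domination theorem delivers the lemma.
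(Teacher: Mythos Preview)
Your proof is correct and follows the same approach as the paper: bound the Papangelou intensity of $\PP_\bsh$ by $\theta$ via the effect of adding one bridge on $\prod_\g z_\bsh(\g)$, then invoke the Georgii--K\"uneth domination theorem. The paper simply asserts the key bound $z_\bsh(\g_1)z_\bsh(\g_2)\leq\theta\, z_\bsh(\g_1\cup\g_2)$ as ``a calculation'' (it follows in one line from Chebyshev's sum inequality, since the sequences $(e^{a\a_1})_a$ and $(e^{a\a_2})_a$ are co-monotone for $\a_i\geq0$), whereas you supply a valid log-convexity argument; the paper also mentions a trivial third case where the loop structure is unchanged, in which the ratio is $1\leq\theta$.
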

Before turning to the proof, we note that $\PP'$ may
alternatively be described as follows.  For each  
pair $xy\in\cE_\L$,
the process of bridges on $\{xy\}\times[0,\b)$ is 
(under $\PP'$) a Poisson  process with intensity $\theta$.
 For all other pairs $x'y'\in\cE_\L$
the processes of bridges on $\{xy\}\times[0,\b)$
and on $\{x'y'\}\times[0,\b)$ are independent.  
\begin{proof}
If $\g_1,\g_2$ are disjoint, measurable
subsets of $\L\times[0,\b)$, a calculation shows that
\begin{equation}\label{add_bridge_eq}
\frac{1}{\theta}\leq
\frac{z_\bsh(\g_1\cup\g_2)}{z_\bsh(\g_1)z_\bsh(\g_2)}
\leq 1\leq \theta.
\end{equation}
If $\tilde\om$ is obtained from $\om$ by adding a bridge then either
some loop $\g\in\cL(\om)$ decomposes into two loops $\g_1,\g_2$,
or two loops $\g_1,\g_2\in\cL(\om)$ are joined to form a
larger loop $\g$, or the loops stay the same.  
In either case,~\eqref{add_bridge_eq} shows that
\[
\frac{\prod_{\tilde\g\in\cL(\tilde\om)}z_\bsh(\tilde\g)}
{\prod_{\g\in\cL(\om)}z_\bsh(\g)}\leq \theta
=\frac{\theta^{|\tilde\om|}}{\theta^{|\om|}}.
\]
It follows from~\cite[Theorem~1.1]{georgii-kuneth} 
that, for any event $A$ 
which is increasing in the
ordering on $\om$, the probability $\PP_\bsh(A)$ is dominated by the
probability of $A$ under the measure with density 
proportional to $\theta^{|\om|}$
with respect to $\rho$.  The latter measure is precisely $\PP'$,
so the result follows.
%
\end{proof}

In~\eqref{cases_eq} we need to bound
\be
\bE[\PP_\bsh(T_\L(x)<\varphi\|x\|)\leq
\bE[\PP'(T_\L(x)<\varphi\|x\|)=
\EE'[\bP(T_\L(x)<\varphi\|x\|)].
\ee
Let $\bP'$ (or $\bE'$) denote the measure under which the classification
of the intervals $I=\{x\}\times[k\d,(k+1)\d)\in\G$ 
into $\bsh$-good and $\bsh$-bad
is done independently over all such intervals, 
with probability $\bP(h_x<\a)^{1/N}$ for $\bsh$-bad.  Then a
straightforward coupling shows that, for any fixed $\om$,
\be
\bP(T_\L(x)<\varphi\|x\|)\leq\bP'(T_\L(x)<\varphi\|x\|).
\ee
(For any $x$ the $\bP'$-probability that \emph{all} intervals 
$I=\{x\}\times[k\d,(k+1)\d)$ are $\bsh$-bad is exactly the same as under
$\bP$, and for all other outcomes the $\bP'$-realization has more bad
intervals than the $\bP$-realization.)  It follows that
\be
\bE[\PP_\bsh(T_\L(x)<\varphi\|x\|)\leq
\bE'[\PP'(T_\L(x)<\varphi\|x\|).
\ee

Under $\PP'$ the labels $\om$-good and $\om$-bad 
assigned to the intervals
in $\G$ are almost independent.   In fact, they are independent
if the intervals are at vertical distance at least 1, or at horizontal
distance at least 2.  This is because the labels assigned to such
intervals are functions of the realization of the Poisson process 
$\om$ in disjoint intervals, and are therefore independent.  Thus,
since we look only at the even sublattice $\cA$,
under $\bE'\times\PP'$
 the labels $\eta(i)$ assigned to the vertices in the
sum in~\eqref{passage_eq} are independent.  

At this point we comment on  differences between 
the two theorems and on the choice of $\a$.
Each $i\in\G$ is `good' with probability at least 
$p:=(1-P(h_x<\a)^{1/N})e^{-2d\theta\d}$.  In the case of 
Theorem~\ref{decay_thm} we take $\a$ as in the statement of the
result, so that all the $h_x$ are
uniformly bounded from below by $\a$.  Then 
$p=e^{-2d\theta\d}$, which firstly does not depend on $x$, and 
secondly approaches 1 uniformly in $\b$ as $\d\to0$.
In the case of Theorem~\ref{thm quenched decay}, 
the $h_x$ are identically distributed, 
so again $p$ does not depend on $x$ (for any $\a$).  
In this case we need to
 pick $\d>0$ small enough and then $\a>0$ small
enough to make $p$ close to 1.

The next step will be to use a general result from the theory of
first-passage percolation. 
Recall that $\cA$ is the even sublattice of $\ZZ^d$, and 
let 
\[
\Xi=\cA\times\{0,\dotsc,N-1\}.
\]   
We view $\Xi$ 
as a graph as follows.  If $x,y\in\cA$ and 
$k,\ell\in\{0,\dotsc,N-1\}$ then $(x,k)$ and $(y,\ell)$ are adjacent in
$\Xi$ if either $x=y$ and $k=\ell\pm 1$ (mod $N$), or if
$k=\ell$ and $x$ and $y$ are \emph{next}-nearest 
neighbours in $\ZZ^d$.  
Thus  $\Xi$ inherits the natural adjacency relation in $\cA$
for the first coordinate, and is also
`periodic' in the last coordinate.  

For convenience we introduce the
probability measure  $\tilde\PP$
which assigns values 0 or 1 to the
elements of $\Xi$ independently, with probability 
$p$ (defined above) for 1.
We also define the
infinite-volume passage time $T(x)$ in the same way as
in~\eqref{passage_eq}, except that we replace the minimum by an 
infimum taken over \emph{all} paths in $\Xi$.  
Note that $T(x)\leq T_\L(x)$, so by 
the reasoning above we have in~\eqref{cases_eq} that
\begin{equation}\label{iid_eq}
\bE[\PP(T_\L(x)<\varphi\|x\|)]\leq \tilde\PP(T(x)<\varphi\|x\|).
\end{equation}
The theorems will follow by applying the following lemma.

\begin{lemma}\label{fpp_lem}
There is $\kappa>0$, depending only on $d$, such that if $p>1-\kappa$
then there are constants $\varphi>0$ and $c_1,c_2>0$,
depending only on $p$ and $d$, such that
\begin{equation}\label{fpp_expdecay}
\tilde\PP(T(x)<\varphi\|x\|)\leq c_1 e^{-c_2\|x\|}.
\end{equation}
\end{lemma}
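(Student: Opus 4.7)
The plan is a classical Peierls-style argument: a union bound over paths combined with a Chernoff tail bound for binomial deviations. Since each $\eta(i)\in\{0,1\}$ is nonnegative, the infimum defining $T(x)$ is attained on self-avoiding paths; for a self-avoiding $\pi$ with $L$ vertices, the sum $\sum_{i\in\pi}\eta(i)$ is exactly $\mathrm{Bin}(L,p)$-distributed under $\tilde\PP$.

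Two combinatorial facts about $\Xi$ drive the estimate. First, $\Xi$ has bounded degree $D=D(d)$: each vertex has two vertical neighbours plus at most $2d^{2}$ horizontal neighbours (the next-nearest neighbours of $x$ in $\ZZ^d$ consist of the $2d$ vectors $\pm 2e_i$ and the $2d(d-1)$ vectors $\pm e_i\pm e_j$, $i\neq j$). Hence the number of self-avoiding paths of length $L$ starting at $(0,0)$ is at most $D^L$. Second, each horizontal step displaces the first coordinate by at most $2$ in $\ell^\infty$-norm, so any path from $(0,0)$ to $(x,0)$ has at least $c_0\|x\|$ vertices, for some $c_0=c_0(d)>0$.

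For each $\lambda>0$ the exponential Markov inequality gives $\tilde\PP(\mathrm{Bin}(L,p)<\varphi\|x\|)\leq e^{\lambda\varphi\|x\|}q^{L}$, where $q=q(p,\lambda):=1-p(1-e^{-\lambda})$. Summing over lengths $L\geq c_0\|x\|$ with the path-count bound $D^L$, a union bound yields
\be
\tilde\PP\bigl(T(x)<\varphi\|x\|\bigr)
\leq e^{\lambda\varphi\|x\|}\sum_{L\geq c_0\|x\|}(Dq)^{L}.
\ee
Fix $\lambda=2\log D$, so that $De^{-\lambda}=1/D<1$, and choose $\kappa>0$ small enough that $Dq\leq D(1-p)+1/D<1$ whenever $p>1-\kappa$. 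The geometric series is then dominated by a constant multiple of $(Dq)^{c_0\|x\|}$; taking $\varphi>0$ small enough that $c_0\log(1/(Dq))>\lambda\varphi$ gives \eqref{fpp_expdecay} with $c_{2}=c_{0}\log(1/(Dq))-\lambda\varphi>0$.

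The only (mild) obstacle is matching constants in the right order: $\lambda$ must be chosen larger than $\log D$ so that $De^{-\lambda}<1$; then $\kappa$ must be small enough that $Dq(p,\lambda)<1$ persists for $p>1-\kappa$; and finally $\varphi$ must be small enough that the per-$\|x\|$ Chernoff gain $c_0\log(1/(Dq))$ beats the per-$\|x\|$ tail-shift cost $\lambda\varphi$. Each requirement can be met by an elementary choice-of-constants calculation, and no subtler probabilistic input is needed beyond independence of the $\eta(i)$, which was built into the definition of $\tilde\PP$.
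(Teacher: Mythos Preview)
Your proof is correct. The argument is more elementary than the paper's: the paper only gives a sketch, pointing to Kesten's first-passage-percolation machinery (decomposition of a long self-avoiding walk into sub-walks of a fixed mesoscopic length, the BK-inequality to decouple them, and subcriticality of the $0$-cluster). You bypass all of that by exploiting directly the fact that under $\tilde\PP$ the $\eta(i)$ along a \emph{fixed} self-avoiding path are i.i.d.\ Bernoulli$(p)$, so a single Chernoff bound suffices; the bounded-degree path count then makes the union bound summable. The paper's route has the advantage that it works as soon as $1-p$ is subcritical for site percolation on $\Xi$ (the stated bound $\kappa\le(2d)^{-2}$ is just a crude sufficient condition for this), whereas your path-counting needs $p$ close enough to $1$ that $Dq<1$; for the lemma as stated, however, only \emph{some} $\kappa>0$ is required, so your self-contained argument is entirely adequate and arguably preferable. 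Two minor remarks: the infimum need not be literally attained, but the event $\{T(x)<\varphi\|x\|\}$ still implies existence of a self-avoiding witness by loop-erasure, which is what you actually use; and your constants are manifestly independent of the period $N$, which the lemma implicitly requires.
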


In fact, here it suffices if $\kappa\leq(2d)^{-2}$.
(The relevance of the number $(2d)^{-2}$ is that it is a lower bound
on the critical probability for site percolation on $\Xi$, as can
easily be proved using standard path-counting arguments, 
see e.g.~\cite[Theorem~1.10]{grimmett-perc}.)
Theorem~\ref{decay_thm} follows on taking $\d>0$ sufficiently small:
as noted above $p$ is then close to 1 uniformly in $\b$ so all the
constants in~\eqref{fpp_expdecay} are uniform in both $\L$ and $\b$.
For Theorem~\ref{thm quenched decay} we first take $\d>0$ 
small, and find that there is $\eps>0$ such that if 
$\bP(h_x<\a)<\eps$ then $p>1-\kappa$, but $\eps$ will now depend
on $N$ and hence $\b$.   (For an explicit bound,
$\eps\leq (1-(1-\kappa)e^{2d\theta\d})^N$ suffices.)

\begin{proof}[Sketch proof of Lemma~\ref{fpp_lem}]
This can be proved by 
adapting~\cite[Proposition~5.8]{Kesten-FPP} 
(see also~\cite{gr-kest-84}).  That result deals with
bond-first-passage-percolation on $\ZZ^d$, and 
our situation is slightly
different since we are dealing with \emph{site}-percolation on the 
sublattice $\cA$ with the next-nearest-neighbour adjacency relation, 
and also the underlying graph is periodic in one
direction.  We give a rough outline of the main ideas.

Write $n=\|x\|$.  On the event that $T(x)<\a n$
there must be a self-avoiding walk $w$ in $\Xi$ which starts at the origin
and contains at least $n$ steps, such that the passage time along
$w$ satisfies
\[
\sum_{i\in w}\eta(i)<\a n.
\]
One may decompose $w$ into a finite 
sequence $w_1,w_2,\dotsc$ of sub-walks, each of which traverses
distance $m$ for some fixed $m$, and the sum of whose passage times is 
`small'.  Since these paths are disjoint, we obtain an upper
bound if we assume that the corresponding passage times are
independent, by the \textsc{bk}-inequality~\cite{bk}.
Since $1-p$ is subcritical for site-percolation in $\Xi$, 
the set of vertices
with passage time 0 from the origin does not percolate, meaning that
for suitable $m$ the passage times for distance $m$ are very unlikely
to be small, and exponential decay follows from a
large-deviations-type estimate.
\end{proof}

For the extension~\eqref{sch_eq} to Schwinger functions, 
we note that
small modifications of Theorem~\ref{thm repr} and
Corollary~\ref{loop_cor} give that
\[
\langle S_0^1 e^{-(\b-t)H_{\L,\bsh}}S_x^1e^{-t H_{\L,\bsh}}\rangle_{\L,\bsh}
\leq \tfrac13 S(S+1)(2S+1)
\EE_\bsh\big(1_{0\lra (x,t)}\e{-\sum_y h_y \ell_y^+(\g_{0,x})}\big).
\]
Here $0\lra(x,t)$ denotes the event that $(0,0)$ and $(x,t)$
lie in the same loop, $\g_{0,x}$ is that loop, 
and $\ell_y^+(\g_{0,x})$ is defined as 
before except that 
it concerns the part of the loop up to $(x,t)$ rather than to $(x,0)$.
The result then follows from straightforward adjustments of the
definition~\eqref{passage_eq} of the passage time $T_\L$ as well as of
the remaining arguments in this section.

\subsection*{Remark}
The key inequality~\eqref{iid_eq} can also be obtained by an appeal to
the main result of~\cite{lss}.  Indeed, as remarked below
Lemma~\ref{preston_lem} the good/bad labelling $\eta$ forms a
1-dependent random field under $\bE'\times\PP'$, with marginal density at
least $p$ 
(using the terminology of~\cite{lss}).  Hence there
is $q>0$, satisfying $q\to1$ as $p\to1$, such that $\eta$
stochastically dominates an i.i.d.\ field with marginal density
$q$.  (A result of this form can alternatively be obtained by
`hands-on' methods.)  

With this approach it is no longer necessary to
restrict the sum in~\eqref{passage_eq} to the even sublattice $\cA$.
Together with straightforward adaptations of the remaining arguments, 
this allows us to extend the results of this
paper to arbitrary translation-invariant lattices with
uniformly bounded degrees (even if they are not bipartite).

{
\renewcommand{\refname}{\small References}
\bibliographystyle{symposium}

}

\end{document}